\newtheorem{theorem}{Theorem}
\newtheorem{lemma}{Lemma}
\newtheorem{corollary}{Corollary}
\newtheorem{proposition}{Proposition}
\newenvironment{proof}{\noindent {\bf Proof.}}{$\blacksquare$}
\newcommand{\ket}[1]{\vert #1\rangle}
\newcommand{\bra}[1]{\langle #1 \vert}
\newcommand{\ketbra}[3]{\langle #1 \vert #2 \vert #3 \rangle}
\newcommand{\inner}[2]{\langle #1 \vert #2 \rangle}
\def\1{\mathbf{1}}
\def\@ssect#1#2#3#4#5{%
  \begingroup \bf\centering{\interlinepenalty \@M \uppercase{#5}\par}\endgroup%
  \@tempskipa #3\relax \@xsect{\@tempskipa}}
\def\thebibliography#1{\section*{References}\list
 {[\arabic{enumi}]}{\settowidth\labelwidth{[#1]}\leftmargin\labelwidth
 \advance\leftmargin\labelsep
 \usecounter{enumi}}
 \def\newblock{\hskip .11em plus .33em minus .07em}
 \sloppy\clubpenalty4000\widowpenalty4000
 \sfcode`\.=1000\relax}
\title{CURVATURE OF QUANTUM RINGS}
\name{Edmond Jonckheere$^1$\thanks{E.\,Jonckheere was supported by the One Wales Research Institute for Visual Computing (RIVIC), the US National Science Foundation under Grant NetSE 1017881, and ARO MURI grant W911NF-11-1-0268.},
Frank C.\ Langbein$^2$\thanks{F.C.\,Langbein was supported by One Wales Research Institute for Visual Computing (RIVIC).}
 and
Sophie G.\ Schirmer$^3$\thanks{S.G.\,Schirmer was supported by EPSRC ARF Grant EP/D07192X/1 and Hitachi.}}
\address{$^1$USC Center for Quantum Information Science \& Technology, Los Angeles, CA 90089, USA\\
         $^2$School of Computer Science and Informatics, Cardiff University, 5 The Parade, CF24 3AA, UK\\
         $^3$College of Science (Physics), Swansea University, Singleton Park, Swansea, SA6 7JY, UK}
\begin{document}
%
\maketitle
\begin{abstract}
We develop a geometric approach to spin networks with Heisenberg or XX
coupling. Geometry is acquired by defining a distance on the discrete
set of spins. The key feature of the geometry of such networks is their
Gauss curvature $\kappa$, viewed here as the ability to isometrically
embed the chain in the standard Riemannian manifold of curvature
$\kappa$. Here we focus on spin rings. Even though their visual geometry
is trivial, it turns out that the geometry they acquire from the quantum
mechanical distance is far from trivial.
\end{abstract}
\begin{keywords}
Spin chains, coarse geometry, curvature, Riemannian spaces, Feynman path integral.
\end{keywords}

\section{Introduction}
\label{sec:intro}

We consider 1-dimensional arrays of $N$ spins arranged in a ring
structure with either Heisenberg or XX interaction specified by the
Hamiltonian
\begin{align*}
H =& \sum_{i=1}^{N-1} J_{i,i+1}
      \left(\sigma^x_i\sigma^x_{i+1} + \sigma^y_i\sigma^y_{i+1}
                             + \epsilon\sigma^z_i\sigma^z_{i+1} \right)\\
   & +J_{N,1} \left(\sigma^x_N\sigma^x_{1}+\sigma^y_N\sigma^y_{1}
                                   +\epsilon\sigma^z_N\sigma^z_{1}\right)
\end{align*}
where $\epsilon=0$ for XX coupling and $\epsilon=1$ for Heisenberg
coupling, which we shall denote by $H_{\rm H}$ and $H_{\rm XX}$ in the
following. The term $J_{N,1}(\ldots)$ represents the coupling energy
between the two ends, spin \#$1$ and \#$N$, of the linear array,
closing the ring. The factor $\sigma^{x,y,z}_i$ is the Pauli matrix
along the $x,y,$ or $z$ direction of spin \#$i$ in the array, i.e.,
\begin{equation*}
\sigma^{x,y,z}_i =
 I_{2\times 2} \otimes \ldots \otimes I_{2 \times 2} \otimes
 \sigma^{x,y,z}\otimes I_{2\times 2}\otimes \ldots \otimes I_{2 \times 2}
\end{equation*}
where the factor $\sigma^{x,y,z}$ occupies the $i$th position among the
$N$ factors and $\sigma^{x,y,z}$ is either of the single spin Pauli
operators
\begin{equation*}
\sigma^x= \begin{pmatrix} 0 & 1 \\ 1 & 0 \end{pmatrix}, \quad
\sigma^y= \begin{pmatrix} 0 & -\imath \\ \imath & 0  \end{pmatrix}, \quad
\sigma^z= \begin{pmatrix} 1 & 0  \\ 0 & -1 \end{pmatrix};
\end{equation*}
$J_{i,i+1}$ denotes the strength of the coupling between spin \#$i$ and
spin \#$(i+1)$ and is inversely proportional to the cubic power of the
{\it physical} distance between spin \#$i$ and spin \#$(i+1)$, here
taken to be uniform (homogeneous arrays). The main point of this paper
is that the simplicity of the geometry given by the physical distance
between spins hides a much more complicated geometry that the network
acquires via a quantum mechanically relevant distance. The latter is
related to be the (maximum) probability of transmission of an excitation
from one spin to another. Here we restrict our attention to the single
excitation subspace, i.e., it is assumed that the total number of
excitations in the network is one. An excitation is transmitted from one
spin and read out from any other spin.

Through the quantum mechanical distance, the spin network acquires a
geometry completely different from the simple geometry of the physical
arrangement of the spins.  For example, a ring made up of an arbitrary large but even number
$N$ of spins becomes a regular $[N/2]$-simplex for the quantum
mechanical distance.  Beyond this simple illustrative example, here by
{\it geometry} we mean {\it curvature}, which can be defined for either
Riemannian or non-Riemannian spaces~\cite{Gromov2001}.  
For metric possibly non-Riemannian spaces, curvature can be defined via the Gromov
$\delta$ or the scaled Gromov $\delta$
(see~\cite{Gromov1987,BridsonHaefliger1999} for various definitions of
$\delta$ and~\cite{scaled_gromov,upper_bound,4_point} for various
definitions of the scaled $\delta$.)  Recall that the Gromov $\delta$
measures the ``fatness'' of the geodesic triangles, with the idea that
``thin'' triangles are symptomatic of negative curvature whereas ``fat''
triangles are symptomatic of positive curvature.  The Gromov 
$\delta$ approach to curvature of spins in various geometrical
arrangements was done in~\cite{first_with_sophie}.  Here we basically
perform the same analysis but remain closer to the traditional
Riemannian approach and focus on ring structures instead of linear
chains.  More specifically, we investigate whether there exists an
isometric embedding $(\mathcal{V},d) \hookrightarrow
\mathbb{M}^r_\kappa$, where $(\mathcal{V},d)$ is the metric space of the
spins endowed with their quantum mechanical distance and
$\mathbb{M}^r_\kappa$ is the standard $r$-dimensional Riemannian space
of uniform curvature $\kappa$.

\section{THE QUEST FOR A DISTANCE}
\label{sec;distance}

Let $\ket{i}$ be the quantum state where the excitation is on spin
\#$i$. The quantum mechanical probability of transition from state
$\ket{i}$ at time $0$ to state $\ket{j}$ at time $t$ is given by
\begin{equation*}
  p\left(\ket{i,0},\ket{j,t}\right)=|\ketbra{i}{e^{-\imath Ht}}{j}|^2
\end{equation*}
Recall that this formula is a corollary of the Feynman path
integral~\cite{kaku,MIT}.  

In order to derive a metric on the vertex set \linebreak
$\mathcal{V}=\left\{\ket{i}:i=1,\dots,N\right\}$ from the probability data, we
inspire ourselves from a closely related situation in sensor networks.
In sensor networks, $\mathcal{V}$ is the set of sensors and a {\it packet
reception rate} $\mathrm{PRR}(i,j)$ is defined as the probability of
successful transmission of the packets from sensor \#$i$ to sensor \#$j$.
Then as shown in~\cite{eurasip_clustering} a useful ``distance'' is
given by $d(i,j)=-\log \mathrm{PRR}(i,j)$. Should there be a violation
of the triangle inequality, say, $d(i,j) > d(i,k)+d(k,j)$, then the
distance between $i$ and $j$ is redefined as $d(i,k)+d(k,j)$.

We follow the same path here, with the warning that packet transmission
from \#$i$ to \#$j$ follows {\it one} wireless link, whereas quantum
mechanical transition from $\ket{i}$ to $\ket{j}$ follows {\it many}
paths.  Following the approach of~\cite{eurasip_clustering}, we could
define a ``distance'' as $-\log p\left(\ket{i,0},\ket{j,t}\right)$, but
this would make the distance time-dependent.  
To remove the dependency on the time, 
define $\Pi_k$ to be the projector onto the $k$th eigenspace of the Hamiltonian
$H=\sum_k \lambda_k \Pi_k$ and let $k=1,...,N$ correspond to the first excitation subspace $\mathcal{H}_1$. 
Then, as 
in~\cite{first_with_sophie}, we define {\it maximum
transition probability} also referred to as {\it Information Transfer Capacity}: 
\begin{equation}
  p_\mathrm{max}(\ket{i},\ket{j}) :=
  \left|\sum_{k=1}^N \left|\bra{i}{\Pi_k} \ket{j}\right|\right|^2
  \ge 
  p\left(\ket{i,0},\ket{j,t}\right)
\end{equation}
Furthermore, as proved in~\cite{first_with_sophie}, under the condition that
$\lambda_k/\pi: k=1,...,N$  
are rationally independent, the maximum transition probability can be reached:
%
$ \sup_{t \geq 0} p\left(\ket{i,0},\ket{j,t}\right) = p_{\max}(\ket{i},\ket{j})$.
%
Although we cannot in general expect \linebreak 
$-\log p_{\max}(\ket{i},\ket{j})$ to satisfy the usual
requirements for a distance, we show that this is the case for certain types of
networks, which allows us to study their geometry and curvature with regard to 
this metric.

\section{UNIFORM SPIN RINGS}
\label{sec:rings}

Unlike for linear chains, the single excitation subspace Hamiltonians
for rings with uniform XX and Heisenberg coupling differ only by a
multiple of the identity, which does not affect the distance.  Hence,
the analysis is the same for both of these physically relevant cases.
The properties of $-\log p_{\max}(\ket{i},\ket{j})$ and the behavior of the distance with $N$ are given
by the following:
\begin{theorem}
\label{t:distance_properties} For a quantum ring ${\cal R}_N$ of $N$
uniformly distributed spins with XX or Heisenberg couplings,
$d_N(i,j):=-\log p_{\max}(\ket{i},\ket{j})$ has the following properties:
\begin{enumerate}
\item For $N$ odd $({\cal R}_N,d_N)$ is a metric space.  

\item For $N$ even $({\cal R}_N,d_N)$ is a semi-metric space that
      becomes metric after antipodal point identification.

\item If $N=p$ or $N=2p$, where $p$ is a prime number, then the
      distances on the space of equivalence classes of spins are
      uniform, i.e., $d_N(i,j)=c_N$ for $i\neq j$.  Otherwise, the
      distances are non-uniform.

\item In all cases $\lim_{N \to \infty} d_N(i,j)=2\log\tfrac{\pi}{2}$, $i\not= j \bmod (N/2)$. (See Fig.~\ref{fig:var_dist_ring} for an illustration.)
\end{enumerate}
\end{theorem}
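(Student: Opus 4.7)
The plan is to exploit that the single-excitation Hamiltonian of a uniform ring is a circulant operator, hence diagonalized in the discrete Fourier basis $\ket{v_k}=N^{-1/2}\sum_{j=1}^N\omega^{jk}\ket{j}$ with $\omega=e^{2\pi\imath/N}$. Its eigenvalues $\lambda_k$ satisfy $\lambda_k=\lambda_{N-k}$, so every nonzero-momentum eigenvalue is doubly degenerate and the corresponding projector combines $\ket{v_k}$ with $\ket{v_{N-k}}$. A direct computation of $\ketbra{i}{\Pi_k}{j}$ with $m=j-i$ gives
\begin{equation*}
p_{\max}(\ket{i},\ket{j})=\frac{1}{N^2}\biggl(c_N+2\sum_{k=1}^{\lfloor(N-1)/2\rfloor}\Bigl|\cos\tfrac{2\pi km}{N}\Bigr|\biggr)^{\!\!2},
\end{equation*}
with $c_N=1$ for $N$ odd and $c_N=2$ for $N$ even (the extra unit coming from the non-degenerate $\Pi_{N/2}$ projector, which contributes $|\ketbra{i}{\Pi_{N/2}}{j}|=1/N$). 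This closed form is the workhorse for all four items.

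For items 1 and 2, I characterize when $p_{\max}=1$: saturation forces every cosine in the sum to have unit modulus, and already the $k=1$ term requires $N\mid 2m$. For $N$ odd this gives $m\equiv 0\pmod N$, so $p_{\max}<1$ whenever $i\neq j$; combined with the triangle inequality that the shortest-path redefinition of Section 2 automatically imposes, $({\cal R}_N,d_N)$ is a genuine metric. For $N$ even the same condition additionally admits $m=N/2$, so $p_{\max}(\ket{i},\ket{i+N/2})=1$ and the semi-metric becomes a metric after quotienting by this antipodal identification.

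For item 3, I show that the trigonometric sum depends on $m$ only through the orbit of $m$ under $(\mathbb Z/M)^\times$, where $M=N$ for odd $N$ and $M=N/2=p$ for $N=2p$ (the even case uses $|\cos(\pi x)|$-periodicity to rewrite the summand modulo $p$). Since $|\cos(2\pi km/M)|$ depends on $km$ only modulo $M$ and only up to the sign $k\mapsto -k$, the summands are indexed by $(\mathbb Z/M\setminus\{0\})/\{\pm 1\}$, and multiplication by any $m\in(\mathbb Z/M)^\times$ is a bijection of this quotient that leaves the sum invariant. When $M$ is prime every nonzero residue is a unit, so all non-trivial distances coincide, proving uniformity. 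For composite $M$, choosing $m$ with $\gcd(m,M)>1$ confines $\{km\bmod M\}$ to a proper subgroup of $\mathbb Z/M$ and changes the sum; the case $N=9$, $m\in\{1,3\}$ is an explicit non-uniformity witness.

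For item 4, I fix distinct $i,j$ so that $m=j-i$ remains bounded as $N\to\infty$. Setting $d=\gcd(m,N)\le|m|$ and $M=N/d\to\infty$, the sum telescopes as
\begin{equation*}
\sum_{k=0}^{N-1}\Bigl|\cos\tfrac{2\pi km}{N}\Bigr|=d\sum_{\ell=0}^{M-1}\Bigl|\cos\tfrac{2\pi\ell}{M}\Bigr|\sim d\cdot M\cdot\tfrac{2}{\pi}=\tfrac{2N}{\pi},
\end{equation*}
because $\tfrac{1}{M}\sum_\ell|\cos(2\pi\ell/M)|$ is a Riemann sum for $\int_0^1|\cos 2\pi x|\,dx=2/\pi$. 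Folding via $k\leftrightarrow N-k$ makes the inner sum in the $p_{\max}$ formula of order $N/\pi$ in both parities, whence $p_{\max}\to 4/\pi^2$ and $d_N(i,j)\to-\log(4/\pi^2)=2\log(\pi/2)$. I regard this last item as the main obstacle, not for the computation itself but for its interpretation: the limit is genuinely sensitive to how $m$ depends on $N$, and the hypothesis $i\neq j\bmod(N/2)$ should be read as the fixed-pair asymptotic regime, without which the antipodal saturation or intermediate scalings like $m=N/3$ would yield different limits.
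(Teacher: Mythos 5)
Your circulant diagonalization, the closed form for $p_{\max}$, the saturation analysis for items 1--2, the unit-orbit argument for item 3, and the Riemann-sum computation for item 4 all match the paper's route (your treatment of item 3 via $(\mathbb{Z}/M)^\times$ is in fact more careful than the paper's, and your closing remark about the fixed-$m$ asymptotic regime is a fair criticism of the statement). However, there is one genuine gap, and it sits at the heart of claims 1 and 2: the triangle inequality. You dispose of it with ``the triangle inequality that the shortest-path redefinition of Section 2 automatically imposes,'' but that redefinition is not available here. The theorem asserts that $d_N(i,j):=-\log p_{\max}(\ket{i},\ket{j})$ \emph{itself} is a (semi-)metric; indeed the paper explicitly flags at the end of Section 2 that $-\log p_{\max}$ cannot in general be expected to satisfy the triangle inequality and that the point of this theorem is to prove that it does for rings. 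If you instead replace $d_N$ by its shortest-path hull whenever the inequality fails, you have changed the function, and your subsequent claims in items 3 and 4 about the \emph{values} of $d_N(i,j)$ (uniformity, the limit $2\log\frac{\pi}{2}$) would then require showing that the hull coincides with $-\log p_{\max}$ --- which is exactly the triangle inequality you skipped. Nor does uniformity rescue you in general: for prime $N$ the inequality $c\le 2c$ is trivial, but for composite $N$ the distances are non-uniform and one needs an actual bound.

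The paper's argument here is the one nontrivial computation in the whole proof: it establishes the multiplicative form $\sqrt{p_{\max}(\ket{\ell},\ket{m})}\,\sqrt{p_{\max}(\ket{m},\ket{n})}\le\sqrt{p_{\max}(\ket{\ell},\ket{n})}$ by writing each factor as $\frac{1}{N}\sum_k \alpha_k\,\rho_N^{k(\cdot)}$ with signs $\alpha_k=\pm 1$ absorbing the absolute values, expanding the product as a double sum, collecting it into $\frac{1}{N^2}\sum_k\gamma_k\rho_N^{k(n-\ell)}$ with $|\gamma_k|\le N$ and $\gamma_k=\gamma_{N-k}$, and then bounding term by term against the expression for $\sqrt{p_{\max}(\ket{\ell},\ket{n})}$. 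Your closed form for $p_{\max}$ already contains everything needed to run this argument, so the fix is localized, but as written the proposal does not prove parts 1 and 2.
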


\begin{proof}
To show that $({\cal R}_N,d_N)$ is a semi-metric space we need to verify
that (i) $d_N(i,i)=0$, (ii) $d_N(i,j)=d_N(j,i)$ and (iii) the triangle
inequality holds.  For a metric space we must further have (iv)
$d_N(i,j)\neq 0$ unless $i=j$.

(i) is clearly satisfied as the projectors onto the eigenspaces are a
resolution of the identity, $\sum_k\Pi_k = I$, and thus for any unit
vector $\ket{i}$, we have $\sum_{k=1}^N |\bra{i}\Pi_k\ket{i}|=
\sum_{k=1}^N |\Pi_k \ket{i}|^2 =1$.  (ii) follows from
$|\bra{i}\Pi_k\ket{j}|=|\bra{j}\Pi_k\ket{i}|$.  The proof of the
remaining properties relies on the circulant matrix property of the
Hamiltonian $H_1$ in the first excitation subspace $\mathcal{H}_1$.  

Specifically, $(H_1)_{i,i+1}=(H_1)_{i+1,i}=h$ for $i=1,\dots,N-1$,
$(H_1)_{N,1}=(H_1)_{1,N}=h$ and $(H_1)_{i,j}=0$ everywhere else.  The eigenvalues
and eigenvectors of $H_1$ are
\begin{subequations}
 \begin{align}
 \lambda_k & =  h(\rho^k_N+\rho^{k(N-1)}_N)
             = 2h\cos\left(\tfrac{2\pi{k}}{N}\right) \\
 w_k       &= \tfrac{1}{\sqrt{N}} \left(1,\rho^k_N, \rho^{k2}_N, 
                                  \ldots, \rho^{k(N-1)}_N\right)^T
\end{align}
\end{subequations}
for $k=0,\dots,N-1$, where $\rho^k_N:=e^{\frac{2\pi \imath k}{N}}$ are
$N$th roots of unity.  Observe the double eigenvalues $\lambda_k=\lambda_{N-k}$ 
except for $k=0$, and $N/2$ if $N$ even.  Thus, each of these double eigenvalues has two complex conjugate
eigenvectors $v_k$ and $v_k^*$.  These eigenvectors need not be orthogonal
but observing that $\inner{w_k}{w_\ell}=\delta_{k\ell}$ and $\inner
{w_k}{w_k^*}=0$, shows that
\begin{equation}
\begin{split}
v_0 &= w_0 = \tfrac{1}{\sqrt{N}}(1,1,\ldots)^T\\ 
v_k &= w_k, \quad v_{N-k} =w_k^*, \quad k=1,\ldots N'=\lfloor\tfrac{N-1}{2} \rfloor \\ 
v_{N/2} &= w_{N/2} = \tfrac{1}{\sqrt{N}}(1,-1,\ldots )^T, \quad \mbox{if N is even}
\end{split}
\end{equation}
defines an orthonormal basis of $\mathcal{H}_1$.  Furthermore, in the
basis in which $H_1$ is circulant, $\ket{i}=e_i$, where $\{e_i:i=1,...,N\}$ 
is the natural basis of $\mathbb{C}^N$. 
We have
\begin{align*}
 |\bra{i}\Pi_0\ket{j}| 
 &= |\inner{i}{v_0}\inner{v_0}{j}| = \tfrac{1}{N}\\
 |\bra{i}\Pi_k\ket{j}|
 &= |\inner{i}{v_k}\inner{v_k}{j} +
     \inner{i}{v_{N-k}}\inner{v_{N-k}}{j}| \\
 &= |\rho_N^{ki} (\rho_N^{kj})^* +
     (\rho_N^{ki})^* \rho_N^{kj}|\tfrac{1}{N} \\
 &= |\rho_N^{k(i-j)}+\rho_N^{-k(i-j)}|\tfrac{1}{N} = \tfrac{2}{N}\left|\cos(\tfrac{2\pi k(i-j)}N)\right|\\
|\bra{i}\Pi_{N/2}\ket{j}|
 &= | \inner{i}{v_{N/2}}\inner{v_{N/2}}{j}| = \tfrac{1}{N}.
\end{align*}
Summing over all eigenspaces $k=0,\ldots,\lfloor N/2 \rfloor$ gives
\begin{align}
\label{eq:pmax}
 & \sqrt{p_{\max}(\ket{i},\ket{j})} \nonumber\\
=& \left\{ \begin{array}{ll}
    \frac{1}{N}+\frac{2}{N}\sum_{k=1}^{N'}\left|\cos\left(\tfrac{2\pi k(i-j)}{N}\right)\right|,
    & N=2N'+1 \\
    \frac{2}{N}+\frac{2}{N}\sum_{k=1}^{N'}\left|\cos\left(\tfrac{2\pi k(i-j)}{N}\right)\right|,
    & N=2N'+2 
\end{array} \right.
\end{align}
For $i=j$ all cosines in Eq.~(\ref{eq:pmax}) are equal to 1 and we have
$\sqrt{p_{\max}(\ket{i},\ket{i})}=(1+2N')/N=1$ for $N=2N'+1$ and
$\sqrt{p_{\max}(\ket{i},\ket{i})}=(2+2N')/N=1$ for $N=2N'+2$, which shows that
$d(i,i)=0$.  For $N=2N'+1$ it is easy to see that $p_{\max}(i,j)=1$ if
and only if $i=j$, hence (iv).  For $N=2N'+2$, on the other hand, we also have
$|\cos(\tfrac{2\pi k N/2}{N})|=|\cos(\pi k)|=1$, and thus $d(i,j)=0$ for
$i-j=N/2$, i.e., the distance vanishes for antipodal points, and thus
$d(i,j)$ is at most a semi-metric.  However, noting that
$d(i,j)=d(i,N'+1+j)$ for $j\le N'+1$, we can identify antipodal points
$\ket{j}$ and $\ket{j+N'+1}$ and let $d$ be defined on the set of
equivalence classes $[\ket{j}]$ for $j=1,\ldots, N'+1$.

To show that the triangle inequality is satisfied, we show that
$\sqrt{p_{\max}(\ket{\ell},\ket{m})} \sqrt{p_{\max}(\ket{m},\ket{n})} \le
\sqrt{p_{\max}(\ket{\ell},\ket{n})}$.  
From the definition of $p_{\mathrm{max}}$ in terms of the eigenvectors of $H_1$ 
we have
\begin{align*}
  \sqrt{p_{\max}(\ket{\ell},\ket{m})} &= \frac{1}{N} \sum_{k=0}^{N-1}  \alpha_k  \rho_N^{k(m-\ell)} \\
  \sqrt{p_{\max}(\ket{m},\ket{n})} &= \frac{1}{N} \sum_{k'=0}^{N-1} \beta_{k'}\rho_N^{k'(n-m)} 
\end{align*}
where $\alpha_k,\beta_{k'}=\pm1$.  Setting
\begin{equation*}
 \gamma_k = \sum_{k'=0}^N \alpha_k \beta_{k'}\rho_N^{(k'-k)(n-m)}
\end{equation*}
we obtain
\begin{align*}
   & \sqrt{p_{\max}(\ket{\ell},\ket{m})} \sqrt{p_{\max}(\ket{m},\ket{n})} \\
 = & \frac{1}{N^2} \sum_{k,k'=0}^{N-1} \alpha_k \beta_{k'}
     \rho^{k(m-\ell)}_N \rho^{k'(n-m)}_N \\
 = & \frac{1}{N^2} \sum_{k,k'=0}^{N-1} \alpha_k \beta_{k'}
     \rho_N^{k(n-\ell) + (k'-k)(n-m)} \\ 
 = & \frac{1}{N^2} \sum_{k=0}^{N-1} \gamma_k \rho_N^{k(n-\ell)} 
 = \left|\frac{1}{N^2} \sum_{k=0}^{N-1} \gamma_k \rho_N^{k(n-\ell)} \right|.
\end{align*}
The final equality follows because the LHS and thus the RHS are known to
be real and positive.  Furthermore, as $\rho_N$ is a root of unity,
$|\rho_N|=1$, and recalling $|\alpha_k|=|\beta_{k'}|=1$,
\begin{align*}
  |\gamma_k| 
    &= \left|\rho_N^{k(m-n)} \sum_{k'=0}^{N-1} \alpha_k \beta_{k'}\rho_N^{k'(n-m)} \right| \\
    &\le \left|\rho_N^{k(m-n)}\right|\cdot \sum_{k'=0}^{N-1} \left|\alpha_k\beta_{k'}\rho_N^{k'(n-m)} \right| = N.
\end{align*}
Again we have $\rho_N^{(N-k)(m-n)}=\rho_N^{-k(m-n)}$, and as the
LHS above is known to be real, we know that we must have
$\gamma_{k}=\gamma_{N-k}$.  Hence, we can again collect exponential
terms pairwise to obtain cosines, which gives for $N=2N'+1$:
\begin{align*}
 \left|\frac{1}{N^2} \sum_{k=0}^{N-1}\gamma_k \rho_N^{k(n-\ell)}\right|
 &= \left|\frac{\gamma_0 }{N^2} 
    + \frac{1}{N^2}\sum_{k=1}^{N'} 2 \gamma_k \cos\left(\tfrac{2\pi k(n-\ell)}{N}\right) \right|\\
 &\le \frac{|\gamma_0|}{N^2} 
    + \frac{2}{N^2}\sum_{k=1}^{N'}|\gamma_k| \left|\cos\left(\tfrac{2\pi k(n-\ell)}{N}\right)\right| \\
 &\le \frac{1}{N} +
 \frac{2}{N}\sum_{k=1}^{N'}\left|\cos\left(\tfrac{2\pi k(n-\ell)}{N}\right)\right| \\
 &= \sqrt{p_{\max}(\ket{\ell},\ket{n})}
\end{align*}
For $N=2N'+2$ we simply replace $\gamma_0$ by $\gamma_0+\gamma_{N'+1}$
above to obtain
\begin{align*}
 \left|\frac{1}{N^2} \sum_{k=0}^{N-1}\gamma_k \rho_N^{k(n-\ell)}\right|
 &\le \frac{2}{N} +
 \frac{2}{N}\sum_{k=1}^{N'}\left|\cos\left(\tfrac{2\pi k(n-\ell)}{N}\right)\right| \\
 & = \sqrt{p_{\max}(\ket{\ell},\ket{n})}.
\end{align*}
This proves (iii) and hence parts (1) and (2) of the theorem. 

To establish (3) we note that if $N=2N'+1$ is prime then 
\begin{equation*}
  \sum_{k=1}^{N'}\left|\cos\left(\tfrac{2\pi k(i-j)}{N}\right)\right| 
 = \sum_{k=1}^{N'}\left|\cos\left(\tfrac{2\pi k}{N}\right)\right| 
\end{equation*}
If $N$ is not $p$ or $2p$ then $N$ and $(i-j)$ will have factors (which
can be canceled) in common for some $(i-j)$ but not for others and
hence we will obtain different distances.  

To establish (iv) letting $N \rightarrow \infty$, it is easily seen that the dependency on
$i,j$ is eliminated provided $i \not= j \bmod (N/2)$. Hence, taking the
norm of the above and then $-\log(\cdot)$ it follows that, for infinite
rings, the distance is uniform for $i \not= j + \bmod (N/2)$.  Finally,
\begin{eqnarray*}
\lefteqn{\lim_{N\to \infty}\sqrt{p_{\mathrm{max}}(\ket{i},\ket{j})}=\lim_{N\to\infty} \frac{2}{N} \sum_{k=0}^{N/2} |\cos((i-j)2\pi k/N)|}~~~~~~~~~~~~~~~~~~~~~~~~~~~~~~\\
&=& \frac{4|i-j|}{2\pi}\int_{0}^{\frac{\pi}{2|i-j|}} \cos(|i-j| x) dx \\
&=& \frac{2|i-j|}{\pi |i-j|} [\sin(|i-j| x)]_0^{\frac{\pi}{2|i-j|}} = \frac{2}{\pi}~~
\end{eqnarray*}
shows that the limiting value for the distance is $d_\infty(i,j)=-2 \log \tfrac{\pi}{2}
\approx 2 \times 0.4516$ for $i \not= j \bmod(N/2)$. 
\end{proof}

The preceding theorem points to discrete spaces which, possibly after
some identification, become complete order $n$ graphs $K_n$ with uniform or nearly
uniform link weight. Complete graphs with uniform link weight are among
the very few that are embeddable in {\it all} spaces of {\it constant}
curvature: negatively curved Riemannian manifolds
$\mathbb{H}_{\kappa<0}$, Euclidean spaces $\mathbb{E}$, and positively
curved Riemannian manifolds $\mathbb{S}_{\kappa>0}$.  The latter
embedding appears the most natural, since the various vertices are
nearly uniformly ``filling'' the whole sphere, whereas it is impossible
to uniformly fill a Euclidean or a hyperbolic space with finitely many
vertices, as $\mathbb{S}_\kappa$ and $\mathbb{H}_\kappa$ are infinite.
Also $K_n$ can be embedded in a sphere of dimension $(n-2)$, whereas the
smallest dimension in which a nonpositively curved manifold can contain
$n$ equidistant points is $(n-1)$.  This not to say that it is futile to
consider embeddings in, say, hyperbolic spaces; indeed, hyperbolic
spaces have specific transport phenomena that could map to the graphs
they support but we begin with embeddability in uniformly positively
curved spaces.

The following theorem makes this embeddability precise. For notational
convenience, let $\widetilde{\cal R}_N$ denote the ring ${\cal R}_N$
after anti-podal identification. Also define the following function
\begin{equation*}
  \kappa_{\mathrm{max}}(n,w):= \left[w^{-1} \cos^{-1}\left(-\tfrac{1}{n-1}\right)\right]^2
\end{equation*}
where $w$ is typically some edge weight.

\begin{theorem}
The spin rings are embeddable in the following spaces:
\begin{enumerate}
\item If $N$ is an odd composite number, $({\cal R}_N,d_N)$ is isometric
      to a complete graph $K_{N}$ with nonuniform link weight;
      furthermore, for $N$ large enough, there exists an isometric
      embedding $({\cal R}_N,d_N) \hookrightarrow
      (\mathbb{S}^{N-1}_\kappa,d_\kappa)$ for $\kappa \leq
      \kappa_{\mathrm{max}}(N,\bar{w})$.

\item If $N=2c$ where $c$ is a composite number, $(\widetilde{\cal
      R}_N,d_N)$ is isometric to a complete graph $K_{N/2}$ with
      nonuniform link weight; furthermore, for $N$ large enough, there
      exists an isometric embedding $({\cal R}_N,d_N) \hookrightarrow
      (\mathbb{S}^{(N/2)-1}_\kappa,d_\kappa)$ for $\kappa \leq
      \kappa_{\mathrm{max}}(N/2,\bar{w})$.

\item \textbf{\textit{a.}} If $N=p$, where $p$ is a prime number, $({\cal R}_N,d_N)$ is isometric to a complete
      graph $K_N$ of uniform edge weight $w=d_N(i,j)$, $i \not= j$;
      furthermore, there exists an isometric embedding $({\cal R}_N,d_N)
      \hookrightarrow (\mathbb{S}^{N-1}_\kappa,d_\kappa)$ in the
      $(N-1)$-dimensional sphere of curvature $\kappa\leq
      \kappa_{\mathrm{max}}(N,w)$.  Finally, there exists an irreducible
      isometric embedding $({\cal R}_N,d_N) \hookrightarrow
      (\mathbb{S}^{N-2}_\kappa,d_\kappa)$ in the sphere of curvature
      $\kappa=\kappa_{\mathrm{max}}(N,w)$.\\
      \textbf{\textit{b.}} If $N=2p$ where $p$ prime, $(\widetilde{\cal R}_N,d_N)$ is
      isometric to a complete graph $K_{N/2}$ with uniform link weight;
      furthermore, there exists an embedding $(\widetilde{\cal R}_N,d_N)
      \hookrightarrow (\mathbb{S}^{N/2-1},d_\kappa)$ for $\kappa\leq
      \kappa_{\mathrm{max}}(N/2,w)$ and an irreducible isometric
      embedding $(\widetilde{\cal R}_N,d_N) \hookrightarrow
      (\mathbb{S}^{N/2-2},d_\kappa)$ for $\kappa =
      \kappa_{\mathrm{max}}(N/2,d)$.

\end{enumerate}
\end{theorem}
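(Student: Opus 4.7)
My plan is to recast embeddability in $\mathbb{S}^r_\kappa$ as a positive-semidefiniteness condition on a pairwise Gram matrix, dispose of the uniform-distance cases (1) and (2) with an exact eigenvalue calculation, and reduce the non-uniform cases (3) and (4) to a perturbation argument, leveraging Theorem~\ref{t:distance_properties}(4) to supply the ``$N$ large enough'' proviso.

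The isometry to a weighted $K_n$ is immediate from Theorem~\ref{t:distance_properties}: parts (1) and (3) give uniform $d_N$ for $N=p$ and $N=2p$ (with $n=N$ in case (1), $n=N/2$ after antipodal identification in case (2)), and non-uniform $d_N$ in the composite cases. What needs work is the sphere embedding. A configuration of $n$ points on $\mathbb{S}^r_\kappa$ realizing prescribed geodesic distances $d_{ij}$ exists iff the Gram matrix $G_{ij}=\cos(d_{ij}\sqrt{\kappa})$ is PSD of rank at most $r+1$, because the inner product of two unit radial vectors on a sphere of radius $1/\sqrt{\kappa}$ equals the cosine of the central angle $d_{ij}\sqrt{\kappa}$.

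For the uniform cases, $G=(1-c)I+cJ$ with $c=\cos(w\sqrt{\kappa})$ and $J$ the all-ones matrix has spectrum $\{1+(n-1)c,\,1-c,\ldots,1-c\}$, so PSDness reduces to $c\ge -1/(n-1)$, equivalently $\kappa\le\kappa_{\mathrm{max}}(n,w)$. For $\kappa$ strictly below $\kappa_{\mathrm{max}}$ the rank of $G$ is $n$ and the embedding lands in $\mathbb{S}^{n-1}_\kappa$; at $\kappa=\kappa_{\mathrm{max}}(n,w)$ the simple eigenvalue $1+(n-1)c$ vanishes, the rank drops to $n-1$, and the image is forced into a great subsphere $\mathbb{S}^{n-2}_\kappa$, yielding the irreducible embedding. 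This covers (1) with $n=N$ and (2) with $n=N/2$. For the non-uniform cases, I would set $\bar{w}=\max_{i\ne j}d_N(i,j)$ and compare $G(\kappa)_{ij}=\cos(d_N(i,j)\sqrt{\kappa})$ with the uniform-weight model $G_0(\kappa)$ of common off-diagonal entry $\cos(\bar{w}\sqrt{\kappa})$. By Theorem~\ref{t:distance_properties}(4), $d_N(i,j)\to -2\log(\pi/2)$ uniformly in $i\ne j$, so the spread of $d_N$ shrinks with $N$ and $\|G(\kappa)-G_0(\kappa)\|\to 0$ uniformly on compact $\kappa$-intervals; Weyl's inequality then transfers PSDness of $G_0(\kappa)$ to $G(\kappa)$ once $N$ is large enough, producing the embedding into $\mathbb{S}^{n-1}_\kappa$ for $\kappa\le\kappa_{\mathrm{max}}(n,\bar{w})$.

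The main obstacle is making the Weyl perturbation step quantitative near the boundary $\kappa=\kappa_{\mathrm{max}}(n,\bar{w})$, where the simple eigenvalue $1+(n-1)\cos(\bar{w}\sqrt{\kappa})$ of $G_0$ collapses to zero; one has to separate it from the robust $(n-1)$-fold eigenvalue $1-\cos(\bar{w}\sqrt{\kappa})$ and quantify how fast the uniform-distance gap $\max_{i\ne j}|d_N(i,j)-\bar{w}|$ decays with $N$, so as to guarantee PSDness over the full stated $\kappa$-range rather than just a compact subinterval bounded away from the endpoint.
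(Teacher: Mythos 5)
Your proposal is correct in substance and follows the same overall strategy as the paper --- reduce spherical embeddability to positivity of the Gram matrix $\{\cos(\sqrt{\kappa}\,d(i,j))\}$, settle the uniform-weight cases exactly, and handle the composite cases by a continuity argument --- but two of your technical devices differ. For the uniform cases the paper does not diagonalize $(1-c)I+cJ$ directly; it derives a three-term recursion for the principal minors of the Toeplitz matrix $T_k$ (Lemma~\ref{l:recursion1}), solves it to get $t_k=(1-c)^{k-1}((k-1)c+1)$, and only then recovers the characteristic polynomial $(s-(1-c))^{n-1}\bigl(s-((n-1)c+1)\bigr)$ in Corollary~\ref{c:eigenvalues}; your rank-one-update computation is more elementary and reaches the same spectrum, hence the same threshold $\kappa_{\mathrm{max}}(n,w)$ and the same rank-drop mechanism for the irreducible embedding into $\mathbb{S}^{n-2}_\kappa$. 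For the composite cases the paper argues via the infinite uniform complete graph $(K_\infty,w_\infty)$ with $w_\infty=\lim_N d_N(i,j)$ and entrywise convergence of $G_\kappa(D_N)$ to the finite sections of the limiting Gram matrix, whereas you perturb off the finite uniform model via Weyl's inequality; these are essentially the same continuity argument in different clothing. The obstacle you flag --- that near $\kappa=\kappa_{\mathrm{max}}$ the smallest eigenvalue of the uniform model degenerates, and that the matrix dimension grows with $N$ so the entrywise error must decay faster than the spectral gap closes --- is real, and the paper does not resolve it either: its limiting argument only secures $\kappa\le\pi^2/(4w_\infty^2)$, the $n\to\infty$ value of $\kappa_{\mathrm{max}}$, rather than the full range $\kappa\le\kappa_{\mathrm{max}}(N,\bar w)$ stated in parts (3) and (4), and it asserts positivity of $G_\kappa(D_N)$ for large $N$ without quantifying the rate. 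Your account is, if anything, more candid about where the argument remains incomplete.
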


\begin{proof}
Part 3.a of the proof is in the Appendix.

Part 3.b of the proof is a corollary of the Appendix.  By doing the
anti-podal identification on the semi-metric space $({\cal R}_N,d_N)$,
one obtains the metric space $(\widetilde{\cal R}_N,d_N)$.  The latter
is clearly isomorphic to $(K_{N/2},w)$, where $w$ is the uniform link
weight.  The result then follows by applying the Appendix to
$(K_{N/2},w)$.

Part 1 relies on the continuity of the Gram matrix relative to distance
data. Define $(K_\infty,w_\infty)$ be the complete graph on countably
infinitely many vertices with uniform link weight $w_\infty:=\lim_{N\to
\infty}d_N(i,j)$, $i \not= j$. By a limiting argument on the Appendix,
it follows that $(K_\infty,w_\infty)$ is isometrically embeddable in an
infinite-dimensional sphere (itself embedded in the Hilbert space
$\ell^2$ as $\sum_{i=1}^\infty x_i^2=1/\sqrt{\kappa}$) of curvature
$\kappa\leq \kappa_{\mathrm{max}}(\infty,w_\infty)=\pi^2/4w_\infty^2$.
Hence the associated Gram matrix $G_\kappa(D_\infty(1,2,3,\dots))$ for
uniform link weight is positive definite.  Trivially, the $N \times N$
section of the Gram matrix $G_\kappa(D_\infty(1,2,\dots))_{N \times N}$
for uniform link weight is also positive definite.  Since $\lim_{N \to
\infty} G_\kappa(D_N(1,\dots,N))=G_\kappa(D_\infty(1,2,\dots))_{N \times
N}$, and since the latter is positive definite, there exists a $N$ large
enough such that $G_\kappa(D_N(1,\dots,N))$ $>0$. The latter means that
the ring ${\cal R}_N$ is isometrically embeddable in
$\mathbb{S}^{N-1}_\kappa$.

Part 2 follows from a combination of the argument of Part 3.b and Part 1.
\end{proof}

\begin{figure}[t]
 \centering
\centerline{\includegraphics[width=.8\columnwidth]{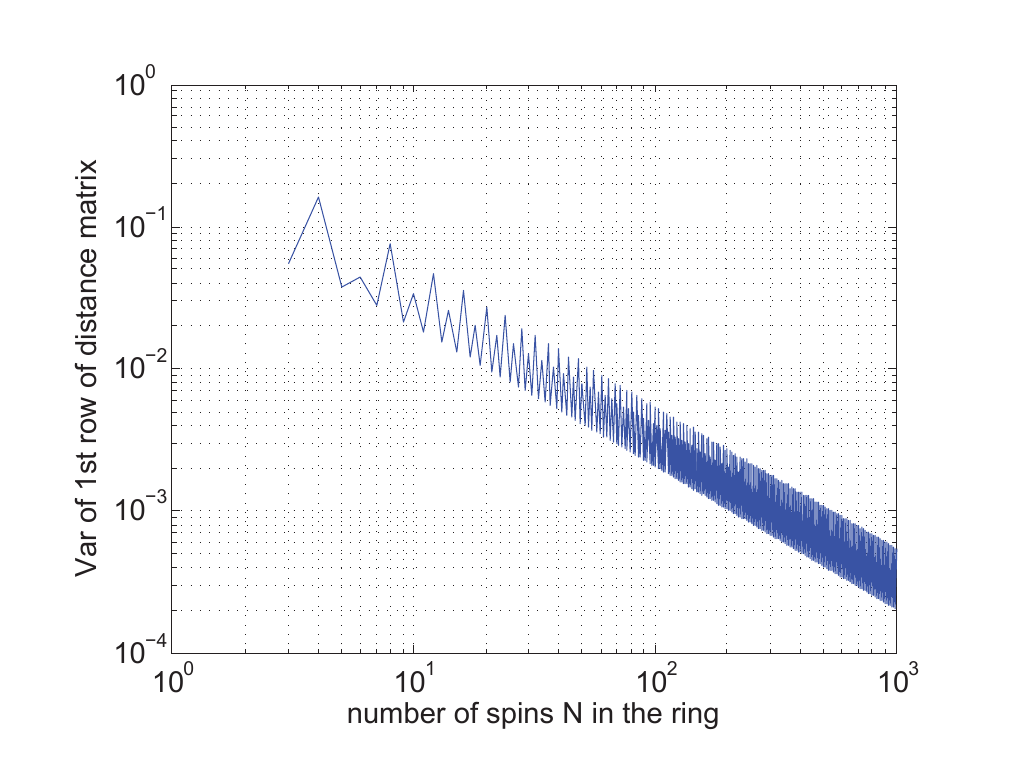}}
 \caption{Variance of quantum mechanical distance between spins showing its
   decrease as the number of spins increases.}
\label{fig:var_dist_ring}
\end{figure}

\section{Conclusion}

In conclusion, besides their 1-dimensional physical geometry, {\it
quantum} rings have been shown to have higher dimensional geometry for
the quantum mechanical distance.  In forthcoming work will investigate
how the geometry can be changed to improve transmission fidelities by
local control.

\section*{Appendix: Embeddability of complete graph in constant curvature spaces}

Given a set $\mathcal{V}$ of $n$ vertices  and the distance data
$D(1,2,\dots,\linebreak n) =\left\{ d(i,j) \right\}_{i,j=1,\dots,n}$, isometric
embedding of $(\mathcal{V},D)$ in a Riemannian manifold of uniform curvature
$\kappa \not= 0$ involves the {\it Gram matrix} $G_\kappa(D(1,\ldots,n))$
(we sometimes simplify this to $G_\kappa(D)$).  In the positive curvature case,
$G_{\kappa>0}(D) = \left\{\cos\sqrt{\kappa}d(i,j)\right\}_{i,j=1,\dots,n}$
and in the negative curvature case
$G_{\kappa<0}(D)=\left\{\cosh\sqrt{-\kappa}d(i,j)\right\}_{i,j=1,\dots,n}$.
$(\mathcal{V},D)$ is embeddable in $\mathbb{S}^{N-1}_\kappa$ iff $\kappa \leq
\frac{\pi^2}{\max_{i,j} d^2(i,j)}$ and $G_\kappa(D)>0$
(see~\cite{Blumenthal1953}).  $G_\kappa(D)>0$ is of course equivalent to
$\det G_\kappa(D(1,\dots,k))>0$, $\forall k=1,\dots,n$.  $(\mathcal{V},D)$ is
embeddable in $\mathbb{H}_{\kappa<0}^{n-1}$ iff $\mathrm{sign} \det
G_\kappa(D(1,\dots,k))=(-1)^{k+1}$ (see~\cite{Blumenthal1953}).

Embeddability in Euclidean space involves the Cayley-Menger matrix
$CM(D(1,\dots,n))=$\\
\centerline{$\begin{pmatrix}
0 & 1            & 1            & \ldots  &  1 \\
1 & 0            & d(1,2)^2 & \ldots  & d(1,n)^2 \\
1 & d(2,1)^2 & 0            & \ldots  & d(2,n)^2 \\
\vdots & \vdots & \vdots & \ddots & \vdots \\
1 & d(n,1)^2 & d(n,2)^2 & \ldots  & 0
\end{pmatrix}$.}
Embeddability in Euclidean space is equivalent to \linebreak
$\mathrm{sign}\det CM(D(1,\dots,k))=(-1)^k$, $k=2,\dots,n$
(see~\cite{Blumenthal1953}).

Embeddability of the complete graph with uniform link weight in both the
constant curvature hyperbolic space and the constant curvature spherical
space involves the special $k \times k$ Toeplitz structure\\
\centerline{$T_k = \begin{pmatrix}
1 & c & \ldots & c \\
c & 1 & \ldots & c \\
\vdots & \vdots & \ddots & \vdots \\
c & c & \ldots & 1
\end{pmatrix}, \quad k \geq 1$}
where $c=\cosh \left( d(i,j)\sqrt{-\kappa}\right)$ in the hyperbolic case and
$c=\cos \left( d(i,j)\sqrt{\kappa}\right)$ in the spherical case. The issue is the
sequence of principal minors of such a Toeplitz-structured matrix. Set
$t_k=\det T_{k \times k}$
and we have the following lemma:
\begin{lemma}
\label{l:recursion1} The recursion on the principal minors of the
Toeplitz-structured matrix $T_k$ is
\begin{equation*}
  t_{k+1}=(1-c)t_k+(1-c)^2t_{k-1}-(1-c)^3t_{k-2}
\end{equation*}
subject to the initial conditions
\begin{align*}
t_1 & = 1 \\
t_2 & = 1-c^2 \\
t_3 & = (1-c)^2(2c+1).
\end{align*}
Furthermore, the solution to the above recursion is given by
\begin{equation*}
  t_k = (1-c)^{k-1} \left( (k-1)c + 1 \right), \quad k \geq 1.
\end{equation*}
\end{lemma}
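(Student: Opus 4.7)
I would attack the closed-form part first, because it is the strongest statement and turns the recursion into a simple algebraic check. The unlocking observation is that $T_k$ has the particularly transparent form
\begin{equation*}
T_k = (1-c)\, I_k + c\, \1\1^T,
\end{equation*}
a scalar multiple of the identity plus a rank-one perturbation. Its eigenvalues can be read off immediately: $\1\in\mathbb{R}^k$ is an eigenvector with eigenvalue $1+(k-1)c$, while any vector orthogonal to $\1$ is an eigenvector with eigenvalue $1-c$, filling out a $(k-1)$-dimensional eigenspace. Multiplying the eigenvalues yields $t_k=(1-c)^{k-1}\bigl(1+(k-1)c\bigr)$, and the three stated initial values $t_1,t_2,t_3$ drop out by substituting $k=1,2,3$.

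With the closed form in hand, I would verify the recursion by direct substitution. Setting $r:=1-c$ and plugging $t_j = r^{j-1}\bigl(1+(j-1)c\bigr)$ into the right-hand side of the recursion, the common factor $r^k$ comes out and one is left with the one-line identity
\begin{equation*}
\bigl(1+(k-1)c\bigr)+\bigl(1+(k-2)c\bigr)-\bigl(1+(k-3)c\bigr) \;=\; 1+kc,
\end{equation*}
which is exactly $t_{k+1}/r^k$. This closes the identity for all $k\ge 3$.

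If instead one wanted to derive the recursion intrinsically and then solve it for the closed form, the route would be to factor the characteristic polynomial $x^3-rx^2-r^2x+r^3=(x-r)^2(x+r)$, obtain the general solution as a combination of $r^k$, $kr^k$ and $(-r)^k$, and then pin down the three constants from the stated initial values. The recursion itself can be produced either by elementary row operations on $T_{k+1}$ or, more quickly, by the remark that $T_k-(1-c)I_k$ has rank one so that $t_k/(1-c)^{k-1}$ is affine in $k$ and hence satisfies a low-order linear recurrence. I do not foresee any substantial obstacle: the entire lemma collapses to the standard spectral calculation for matrices of the form $aI+b\,\1\1^T$ together with a short algebraic verification.
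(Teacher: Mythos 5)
Your proposal is correct, but it reaches the result by a genuinely different route from the paper. The paper first \emph{derives} the third-order recursion by column and row subtractions on $T_k$ together with the multilinear expansion of the determinant of a sum of matrices, and only afterwards observes that the closed form $t_k=(1-c)^{k-1}((k-1)c+1)$ satisfies that recursion and its initial conditions. You instead obtain the closed form directly from the decomposition $T_k=(1-c)I_k+c\,\mathbf{1}\mathbf{1}^T$: the spectrum is $1+(k-1)c$ (simple, eigenvector $\mathbf{1}$) and $1-c$ (multiplicity $k-1$), so the determinant is immediate, and the recursion then reduces to the identity $(k-1)+(k-2)-(k-3)=k$ after dividing by $(1-c)^k$. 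This is logically complete: since the minors equal the closed form and the closed form satisfies the recursion, the minors do too. Your route is shorter and avoids the somewhat delicate sign bookkeeping in the paper's column-operation argument; it also makes Corollary~\ref{c:eigenvalues} immediate (the characteristic polynomial $(s-(1-c))^{n-1}(s-((n-1)c+1))$ is just the spectrum you already computed, so the paper's binomial-sum manipulation becomes unnecessary), and the eigenvector observation reappears anyway in the paper's proof of Lemma~\ref{l:cayley_menger}. What the paper's derivation buys is an intrinsic proof that the minors satisfy a low-order linear recurrence without first guessing the answer, which is what the lemma literally announces; but as a verification of the stated claims, your argument is complete and arguably cleaner.
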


\begin{proof}
By subtracting the first column from the last column of $T_k$, we get
\begin{equation*}
  \det T_k = (-1)^{k+1} (c-1) \det T + (1-c) \det T_{k-1}
\end{equation*}
where $T$ is the Toeplitz matrix with $1$'s on the superdiagonal and
$c$'s everywhere else.  Again, by subtracting the first row from the
last row of $T$, we get
\begin{equation*}
\det T= (-1)^{k+1} (c-1) \det \begin{pmatrix}
                          c & c 1^T_{k-3} \\
                          c 1_{k-3} & T_{k-3}
                          \end{pmatrix}
\end{equation*}
where $1_k$ is the $k$-dimensional column made up of $1$'s.
Observing that
\begin{equation*}
  \begin{pmatrix}
       c & c 1^T_{k-3} \\
       c 1_{k-3} & T_{k-3}
  \end{pmatrix}
= \begin{pmatrix}
       (c-1) & 0 \\
          0  & 0
  \end{pmatrix} + T_{k-2}
\end{equation*}
and remembering that the determinant of the sum of two matrices equals the sum
of the determinants of all matrices constructed with some columns of the first matrix
and the complementary columns of the second matrix, we get
\begin{align*}
\det \begin{pmatrix}
  c & c 1^T_{k-3} \\
 c 1_{k-3} & T_{k-3}
\end{pmatrix} &= \det T_{k-2} + \det \begin{pmatrix}
                       (c-1) & c1_{k-3}\\ 0   & T_{k-3}
                       \end{pmatrix} \\
 &= \det T_{k-2} + (c-1) \det T_{k-3}.
\end{align*}
Combining all of the above yields the recursion. The initial conditions
on the recursion are trivial to verify. The explicit solution is easily
seen by direct verification to satisfy the recursion and its initial
conditions.
\end{proof}

From the above, it is possible to say something about the eigenvalues of $T_n$.

\begin{corollary}
\label{c:eigenvalues}
\begin{equation*}
   \det \left( sI-T_n \right)=(s-(1-c))^{n-1}(s-((n-1)c+1)).
\end{equation*}
\end{corollary}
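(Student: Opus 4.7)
The plan is to reduce the characteristic polynomial of $T_n$ to an instance of the closed-form determinant formula in Lemma~\ref{l:recursion1}. Observe that $sI - T_n$ is itself a Toeplitz-structured matrix of the same shape as $T_n$: it has $(s-1)$ on the diagonal and $-c$ in every off-diagonal position. Pulling a factor of $(s-1)$ out of each of the $n$ rows (assuming temporarily $s \neq 1$; the final identity is polynomial in $s$ so extends by continuity) gives
\begin{equation*}
  \det(sI - T_n) = (s-1)^n \det T_n(c'), \qquad c' := -\tfrac{c}{s-1},
\end{equation*}
where $T_n(c')$ denotes the Toeplitz matrix of Lemma~\ref{l:recursion1} with its off-diagonal entry $c$ replaced by $c'$.

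Next I would apply the closed-form solution $t_n = (1-c')^{n-1}\bigl((n-1)c' + 1\bigr)$ from Lemma~\ref{l:recursion1} directly to $T_n(c')$. Substituting $c' = -c/(s-1)$ gives
\begin{equation*}
  \det T_n(c') = \left(\tfrac{s-1+c}{s-1}\right)^{n-1} \cdot \tfrac{s-1-(n-1)c}{s-1},
\end{equation*}
so that after multiplying back by $(s-1)^n$ the denominators cancel and one obtains
\begin{equation*}
  \det(sI - T_n) = \bigl(s-(1-c)\bigr)^{n-1}\bigl(s - ((n-1)c+1)\bigr),
\end{equation*}
which is exactly the stated factorization.

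There is essentially no obstacle here: the only subtle point is the harmless division by $(s-1)$, which is resolved by noting that both sides of the claimed identity are polynomials in $s$, so equality on the Zariski-open set $\{s\neq 1\}$ extends to all $s$. As a sanity check, the eigenvalues $1-c$ (with multiplicity $n-1$) and $1+(n-1)c$ (simple) agree with the spectrum of the rank-one perturbation $T_n = (1-c)I + c\,\mathbf{1}\mathbf{1}^T$, whose non-degenerate eigenvector is $\mathbf{1}$ with eigenvalue $(1-c) + cn = 1+(n-1)c$, and whose degenerate eigenspace $\mathbf{1}^\perp$ has eigenvalue $1-c$.
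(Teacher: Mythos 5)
Your argument is correct, but it takes a genuinely different route from the paper. The paper computes $\det(sI-T_n)$ coefficient by coefficient, using the fact that the coefficient of $s^{n-k}$ is $(-1)^k$ times the sum of the $\binom{n}{k}$ principal $k\times k$ minors, all equal to $t_k$ by Lemma~\ref{l:recursion1}; it then factors the resulting binomial sum by hand. You instead observe that $sI-T_n$ is again a matrix of the same Toeplitz shape, factor $(s-1)$ out of each row, and apply the closed form of Lemma~\ref{l:recursion1} with the off-diagonal parameter $c'=-c/(s-1)$; the algebra then collapses to the stated factorization. This is legitimate, with two small points worth making explicit: (a) Lemma~\ref{l:recursion1} is stated for a scalar $c$ arising as a cosine or hyperbolic cosine, but its proof is a purely determinantal identity, so the closed form holds for an arbitrary (even rational-function-valued) off-diagonal entry, which is what your substitution needs; (b) the division by $(s-1)$ is, as you say, harmless since both sides are polynomials in $s$. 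Your approach buys a shorter computation that avoids the binomial-sum manipulation, at the cost of invoking the lemma outside its nominal hypotheses. In fact, your closing ``sanity check'' --- writing $T_n=(1-c)I+c\,\mathbf{1}\mathbf{1}^T$ and reading off the eigenvalue $1+(n-1)c$ on $\mathbf{1}$ and $1-c$ on $\mathbf{1}^\perp$ --- is itself a complete and arguably cleaner proof of the corollary that needs neither Lemma~\ref{l:recursion1} nor the paper's expansion; you could promote it from a remark to the main argument.
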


\begin{proof}
Recall that the coefficient of $s^{n-k}$ in $\det \left( sI-T_n \right)$
is $(-1)^k$ times the sum of all principal minors of order $k$ of
$T_n$. There are $\binom{n}{k}$ such principal minors, all equal to
$t_k$. Hence,
\begin{eqnarray*}
\lefteqn{\det \left( sI-T_n \right) = \sum_{k=0}^n (-1)^k \binom{n}{k} t_k s^{n-k}}\\
&=& \sum_{k=0}^n (-1)^k \binom{n}{k} (1-c)^{k-1}\left( (k-1)c+1 \right) s^{n-k} \\
&=&\left( \sum_{k=0}^{n-1} (-1)^k \binom{n-1}{k}(1-c)^{k} s^{n-1-k} \right) \\
&&\quad \quad \times \left( s-((n-1)c+1) \right) \\
&=& (s-(1-c))^{n-1}(s-((n-1)c+1))
\end{eqnarray*}
\end{proof}

We now look at the recursion in the case of the Cayley-Menger matrix.

\begin{lemma}
\label{l:cayley_menger}
The recursion on the $k \times k$ top left-hand corner principal minors
$\mathrm{cm}_k:=\det\left(CM_{k\times k}\right)$ of the Cayley-Menger
matrix for uniform distance $d$ is given by
\[ \mathrm{cm}_k=-\left( \frac{(k-1)}{d^2(k-2)}\right) t_{k-1} \]
where the recursion on $t_{k-1}$ is
\[ t_{k-1}=-\left( \frac{(k-2)d^2}{k-3}\right) t_{k-2}. \]
\end{lemma}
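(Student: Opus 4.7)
The plan is to exploit the block structure of the Cayley--Menger matrix and apply the Schur complement (block-determinant) formula. For uniform distance $d$, the $k\times k$ top-left principal submatrix decomposes as
\[
  CM_{k\times k} = \begin{pmatrix} 0 & \1^T \\ \1 & S \end{pmatrix},
  \qquad S = d^2(J_{k-1} - I_{k-1}),
\]
where $\1$ is the all-ones column of length $k-1$, $J_{k-1}$ is the $(k-1)\times(k-1)$ all-ones matrix, and $S$ is the inner distance block ($0$ on the diagonal, $d^2$ elsewhere). I would define $t_{k-1}:=\det(S)$, which reduces the task to evaluating a Schur complement.

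The key observation is that $\1$ is an eigenvector of $S$: since $(J_{k-1}-I_{k-1})\1 = (k-1)\1 - \1 = (k-2)\1$, one has $S\1 = d^2(k-2)\1$ and hence $S^{-1}\1 = \frac{1}{d^2(k-2)}\1$. The block-determinant identity $\det(CM_{k\times k}) = \det(S)\,(0 - \1^T S^{-1}\1)$, combined with $\1^T\1 = k-1$, then yields
\[
  \mathrm{cm}_k \;=\; -\frac{\1^T\1}{d^2(k-2)}\,\det(S) \;=\; -\frac{k-1}{d^2(k-2)}\, t_{k-1},
\]
which is the first identity.

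For the second identity I would apply the same decomposition one level deeper, writing
\[
  S = \begin{pmatrix} 0 & d^2\1^T \\ d^2\1 & S' \end{pmatrix}
\]
with $S'$ the analogous $(k-2)\times(k-2)$ inner block, so $t_{k-2}=\det(S')$. The same eigenvector computation gives $(S')^{-1}\1 = \frac{1}{d^2(k-3)}\1$, and the Schur complement formula produces
\[
  t_{k-1} \;=\; \det(S')\bigl(0 - d^4\,\1^T (S')^{-1}\1\bigr) \;=\; -\frac{(k-2)d^2}{k-3}\, t_{k-2},
\]
exactly as stated.

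The main obstacle is not the computation itself but the \emph{conceptual} step of identifying the correct intermediate quantity: $t_{k-1}$ should be taken as the determinant of the \emph{inner distance block} of $CM_{k\times k}$ (not as a Toeplitz-style object like the $t_k$ of Lemma 1), and the fact that the all-ones vector is a common eigenvector of every inner block is what makes both Schur complements collapse to clean scalars. Once that identification is made, both identities follow in a few lines of algebra. An alternative route, closer in spirit to the proof of Lemma 1, would subtract column 2 of $CM_{k\times k}$ from each of columns 3 through $k$ and expand along the resulting mostly-zero first row; this yields the same recursion but at the cost of considerably heavier sign and dimension bookkeeping.
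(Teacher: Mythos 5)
Your proposal is correct and is essentially the paper's own proof: the paper likewise writes $CM_{k\times k}$ as the block matrix with scalar $0$, all-ones border, and inner block $T_{k-1}$ (zero diagonal, $d^2$ elsewhere), applies the Schur complement determinant formula, uses that the all-ones vector is an eigenvector of the inner block with eigenvalue $d^2(k-2)$, and then repeats the identical argument one level down on $T_{k-1}$ itself to obtain the second identity. Your identification of $t_{k-1}$ as the determinant of the inner distance block is exactly the paper's definition, so there is no substantive difference between the two arguments.
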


\begin{proof}
Applying the Schur lemma to
\[\mathrm{CM}_{k \times k}=\left(\begin{array}{cc}
0 & 1^T_{k-1}\\
1_{k-1} & T_{k-1}
\end{array}\right)\]
where $T_{k-1}$ is the Toeplitz matrix with $0$'s on the diagonal and $d^2$'s
everywhere else, we get
\[ \mathrm{cm}_k=- \left( 1^T_{k-1} T^{-1}_{k-1} 1_{k-1} \right) \det (T_{k-1}).  \]
If we now observe that $1_{k-1}$ is an eigenvector of $T_{k-1}$ with
eigenvalue $d^2(k-2)$, it follows that $1^T_{k-1} T^{-1}_{k-1} 1_{k-1}=(k-1)/(d^2(k-2))$
and setting $t_{k-1}:=\det(T_{k-1})$ the first part of the recursion follows.
Next, if we apply exactly the same Schur lemma argument to
\[T_{k-1}=\left(\begin{array}{cc}
0 & d^2 1^T_{k-2}\\
d^2 1_{k-2} & T_{k-2}
\end{array}\right)\]
the second part of the recursion follows.
\end{proof}

\begin{proposition}
The complete graph $K_{n \geq 2}$ with uniform link weight $d(i,j)>0$, $i \not= j$, is
irreducibly isometrically embeddable in
$\mathbb{E}^{n-1}$ and in $\mathbb{H}^{n-1}_{\kappa<0}$. The same graph is
isometrically embeddable in $\mathbb{S}^{n-1}_{\kappa > 0}$ iff
\begin{equation}
\label{e:isometric_embedding}
\kappa \leq \left[d(i,j)^{-1} \cos^{-1} \left( -\tfrac{1}{n-1} \right) \right]^2.
\end{equation}

Furthermore, it is irreducibly
isometrically embeddable in $\mathbb{S}^{n-2}_{\kappa > 0}$
for
\begin{equation}
\label{e:irreducible}
\kappa= \left[ d(i,j)^{-1} \cos^{-1} \left( -\tfrac{1}{n-1} \right) \right]^2.
\end{equation}
\end{proposition}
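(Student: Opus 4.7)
The plan is to treat the three ambient geometries in parallel, in each case expressing the relevant $n \times n$ matrix in the Toeplitz form of Lemma~\ref{l:recursion1} or Lemma~\ref{l:cayley_menger} and reading off the Blumenthal criterion directly from the closed-form formula for its principal minors.

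For the spherical case I would set $c = \cos(\sqrt{\kappa}\,d(i,j))$, so that $G_\kappa(D)$ is exactly the matrix $T_n$ of Lemma~\ref{l:recursion1}. Corollary~\ref{c:eigenvalues} identifies its spectrum as $(1-c)$ with multiplicity $n-1$ and $(n-1)c + 1$ once. Positive definiteness therefore reduces to the two inequalities $c < 1$ and $c > -\tfrac{1}{n-1}$; the first is automatic once the Blumenthal side-condition $\kappa \leq \pi^2/d^2$ is in force, and the second rearranges to (\ref{e:isometric_embedding}). At the critical value $\kappa = \kappa_{\mathrm{max}}$, the eigenvalue $(n-1)c + 1$ vanishes while $1-c$ remains strictly positive, so $G_\kappa$ is positive semidefinite of rank $n-1$; the $n$ unit vectors therefore lie in an $(n-1)$-dimensional subspace of the ambient $\mathbb{R}^n$, cutting $\mathbb{S}^{n-1}_\kappa$ in a great subsphere $\mathbb{S}^{n-2}_\kappa$, yielding (\ref{e:irreducible}). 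The hyperbolic case is handled by the same spectral formula with $c = \cosh(\sqrt{-\kappa}\,d) > 1$: now $(1-c)^{k-1}((k-1)c+1)$ has sign $(-1)^{k-1} = (-1)^{k+1}$ for every $k$, matching Blumenthal's hyperbolic criterion, and because no minor vanishes the embedding exhausts all $n-1$ dimensions irreducibly for every $\kappa < 0$.

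For the Euclidean case I would apply Lemma~\ref{l:cayley_menger} together with the observation that the auxiliary Toeplitz matrix appearing there (with $0$'s on the diagonal and $d^2$ off-diagonal) equals $d^2(J - I)$, whose spectrum $\{(k-1)d^2,\,-d^2,\dots,-d^2\}$ delivers $t_k = (-1)^{k-1}(k-1)d^{2k}$ by direct computation. Unwinding the recursion of Lemma~\ref{l:cayley_menger} then gives $\mathrm{cm}_{k+1}$ the sign $(-1)^k$, matching the Blumenthal Euclidean criterion; non-vanishing of every Cayley-Menger minor forces the embedding to fill $n-1$ dimensions, giving irreducibility in $\mathbb{E}^{n-1}$.

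The main obstacle I anticipate is not algebraic but bookkeeping: aligning the off-by-one indexing between ``$k$ points'' in Blumenthal's sign patterns and ``$k\times k$ minor'' in Lemmas~\ref{l:recursion1}--\ref{l:cayley_menger}, and translating rank of $G_\kappa$ into embedding dimension on the sphere --- an $n$-point configuration whose spherical Gram matrix has rank $r$ sits on a great $\mathbb{S}^{r-1}$, not $\mathbb{S}^r$. It is precisely at the spherical boundary case $\kappa = \kappa_{\mathrm{max}}$, where one must argue that exactly one dimension is lost (no more, no less), that the explicit spectrum of Corollary~\ref{c:eigenvalues}---rather than merely the determinantal vanishing---is essential.
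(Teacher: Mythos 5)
Your proposal is correct and follows essentially the same route as the paper: reduce each case to the Toeplitz matrix of Lemma~\ref{l:recursion1} (or the Cayley--Menger block of Lemma~\ref{l:cayley_menger}), read off the signs of the principal minors $(1-c)^{k-1}((k-1)c+1)$, and apply Blumenthal's criteria, with the critical curvature coming from $(n-1)c+1=0$. The only cosmetic differences are that you invoke the eigenvalue factorization of Corollary~\ref{c:eigenvalues} to make the rank-drop at $\kappa=\kappa_{\mathrm{max}}$ explicit (the paper argues via the vanishing top minor alone), and that you compute $t_k$ for the Euclidean block directly from the spectrum of $d^2(J-I)$ rather than from the recursion.
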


\begin{proof}
The proof of Euclidean embedding follows at once from Lemma~\ref{l:cayley_menger},
as the latter indeed reveals that \linebreak
$\mathrm{sign} \det\left(\mathrm{CM}_{k\times k}\right)$ is alternating.

For embeddability in hyperbolic space, set \linebreak $c=\cosh \left(
d(i,j) \sqrt{-\kappa} \right)>1$, and then the lemma yields $\det
G_\kappa(D(1,\dots,k))=(1-c)^{k-1}((k-1)c+1)$.  The principal minors of
$G_\kappa(D(1,\dots,n))$ clearly never vanish and their signs have the
required alternating property, from which irreducible isometric
embedding follows.

For embeddability in spherical space, set \linebreak $c=\cos \left(
d(i,j) \sqrt{\kappa} \right) \leq 1$ and then the lemma yields
\linebreak $G_\kappa (D(1,\dots,k))=(1-c)^{k-1}((k-1)c+1)$. Isometric
embeddability is hence equivalent to the sequence $(k-1)c+1$,
$k=1,\dots,n$ being positive, with possibly a vanishing tail. $(k-1)c+1
\geq 0$ is clearly equivalent to
\begin{equation}
\label{e:positive_forall_k}
\kappa \leq \left[ d(i,j)^{-1} \cos^{-1} \left( -\tfrac{1}{k-1} \right)\right]^2
\end{equation}
and since
\begin{equation}
  \cos^{-1} \left(- \tfrac{1}{n-1} \right) < \cos^{-1} \left(
      -\tfrac{1}{k-1} \right) , \quad k < n
\end{equation}
it follows that $\det G_\kappa(D(1,\dots,k))=(1-c)^{k-1}((k-1)c+1)$ could
only possibly vanish for $k=n$ and is positive for $k<n$. Hence the
graph is isometrically embeddable iff~(\ref{e:positive_forall_k}) is
satisfied $\forall k \leq n$, which is equivalent
to~(\ref{e:isometric_embedding}).  The irreducible isometric embedding
in $\mathbb{S}^{n-2}_\kappa$ requires, in addition, that $(n-1)c+1=0$,
which is equivalent to~(\ref{e:irreducible}).
\end{proof}


\begin{thebibliography}{99}

\bibitem{eurasip_clustering}
F.~Ariaei, M.~Lou, E.~Jonckheere, B.~Krishnamachari, and M.~Zuniga,
\newblock Curvature of indoor sensor network: clustering coefficient,
\newblock {\em J.\ Wireless Communications and Networking}, vol.
  2008, Article ID 213185, 2008.

\bibitem{Blumenthal1953}
L.~M.~Blumenthal,
\newblock {\em Theory and Applications of Distance Geometry},
\newblock Oxford Clarendon Press, London, 1953.

\bibitem{BridsonHaefliger1999}
M.~R.~Bridson and A.~Haefliger,
\newblock {\em Metric Spaces of Non-Positive Curvature}, vol. 319 of {\em A
  Series of Comprehensive Surveys in Mathematics},
\newblock Springer, New York, 1999.

\bibitem{Gromov1987}
M.~Gromov,
\newblock Hyperbolic groups,
\newblock in {\em Essays in Group Theory}, S.~M.~Gersten, Ed., vol.~8 of {\em
  Mathematical Sciences Research Institute Publication}, pp. 75--263.
  Springer, New York, 1987.

\bibitem{Gromov2001}
M.~Gromov,
\newblock {\em Metric Structures for {R}iemannian and Non-{R}iemannian Spaces},
  vol. 152 of {\em Progress in Mathematics},
\newblock Springer, New York, 2001.

\bibitem{scaled_gromov}
E.~Jonckheere, P.~Lohsoonthorn, and F.~Bonahon,
\newblock Scaled {G}romov hyperbolic graphs,
\newblock {\em J.\ Graph Theory}, vol. 57, pp. 157--180, 2008.

\bibitem{4_point}
E.~Jonckheere, F.~Ariaei, and P.~Lohsoonthorn,
\newblock Scaled {G}romov four-point condition for network graph curvature computation,
\newblock {\em Internet Mathematics}, vol. 7, no. 3, pp. 137--177, 2011.

\bibitem{first_with_sophie}
E.~Jonckheere, S.~Schirmer, and F.~Langbein,
\newblock Geometry and curvature of spin networks,
\newblock in {\em IEEE Multi-Conference on Systems and Control},
  pp. 786--791, 2011.
\newblock (Available at arXiv:1102.3208v1.)

\bibitem{kaku}
M.~Kaku,
\newblock {\em Introduction to Superstrings},
\newblock Graduate Texts in Contemporary Physics. Springer, New York, 1998.

\bibitem{upper_bound}
P.~Lohsoonthorn, E.~Jonckheere, and F.~Ariaei,
\newblock Upper bound on scaled {G}romov-hyperblic delta,
\newblock {\em J.\ App.\ Mathematics and Computation}, vol. 192, pp.
  191--204, 2007.

\bibitem{MIT}
N.~Prakash,
\newblock {\em Mathematical Perspectives on Theoretical Physics: A Journey from
  Black Holes to Superstrings},
\newblock Tata McGraw-Hill, New Delhi, 2000.

\end{thebibliography}

\vfill\pagebreak
\end{document}